\title{Heuristic computation of exact treewidth}
\author{Hisao Tamaki}
{Meiji University, Department of Computer Science, 
Japan} {hisao.tamaki@gmail.com} {https://orcid.org/0000-0001-7566-8505}{}
\authorrunning{Hisao Tamaki}
\keywords{graph algorithm, treewidth, heuristics, 
BT dynamic programming, contraction, obstruction, 
minimal forbidden minor, certifying algorithms}
\renewcommand{\AA}{{\cal A}}
\newcommand{\TT}{{\cal T}}
\newcommand{\BB}{{\cal B}}
\newcommand{\XX}{{\cal X}}
\newcommand{\tw}{{\mathop{\rm tw}}}
\newcommand{\lb}{{\mathop{\rm lb}}}
\newcommand{\ncs}{{\mathop{\rm ncs}}}
\newcommand{\nc}{{\mathop{\rm nc}}}
\mathchardef\mhyphen="2D
\newcommand{\BreakFill}{{\sc BreakFill}}
\newcommand{\Lift}{{\sc Lift}}
\newcommand{\InitialSolution}{{\sc InitialSolution}}
\newcommand{\Improve}{{\sc Improve}}
\newcommand{\Merge}{{\sc Merge}}
\newcommand{\ML}{{\mathop{\rm ML}}}
\newcommand{\COV}{{\mathop{\rm COV}}}
\newcommand{\hide}[1]{}
\begin{document}
\maketitle

\begin{abstract}
We are interested in computing the treewidth $\tw(G)$ of
a given graph $G$. Our approach is to design heuristic
algorithms for computing a sequence of improving upper bounds and 
a sequence of improving lower bounds, which would hopefully 
converge to $\tw(G)$ from both sides.
The upper bound algorithm extends and simplifies Tamaki's
unpublished work on a heuristic use
of the dynamic programming algorithm for deciding treewidth due to 
Bouchitt\'{e} and Todinca.
The lower bound algorithm is based on the well-known fact
that, for every minor $H$ of $G$, we have $\tw(H) \leq \tw(G)$.
Starting from a greedily computed minor $H_0$ of $G$, the algorithm
tries to construct a sequence of minors $H_0$, $H_1$, \ldots $H_k$
with $\tw(H_i) < \tw(H_{i + 1})$ for $0 \leq i < k$ and
hopefully $\tw(H_k) = \tw(G)$. 

We have implemented a treewidth solver based on this approach and
have evaluated it on the bonus
instances from the exact treewidth track of PACE 2017 algorithm 
implementation challenge. The results show 
that our approach is extremely effective in tackling instances that
are hard for conventional solvers.
Our solver has an additional advantage over conventional ones
in that it attaches a compact certificate to the lower bound
it computes.
\end{abstract}
\newpage
\section{Introduction}
\label{sec:intro}
Treewidth is a graph parameter which
plays an essential role in the graph minor theory 
\cite{robertson1986graph,robertson1995graph,robertson2004graph}
and is an indispensable tool in designing graph algorithms 
(see, for example, a survey \cite{bodlaender2006treewidth}).
See Section~\ref{sec:prelim} for the definition of treewidth and
tree-decompositions. Let $\tw(G)$ denote the treewidth of graph $G$.
Deciding if $\tw(G) \leq k$ for given $G$ and $k$ 
is NP-complete \cite{arnborg1987complexity}, 
but admits a fixed-parameter linear time algorithm \cite{bodlaender1996linear}.

Practical algorithms for treewidth have also been actively studied
\cite{bodlaender2010treewidth,bodlaender2011treewidth,bannach2017jdrasil,tamaki2019positive,
tamaki2019computing,althaus2021ontamaki}, with recent progresses 
stimulated by PACE 2016 and 2017 \cite{dell2018pace} 
algorithm implementation challenges. The modern treewidth solvers
use efficient implementations of the dynamic programming algorithm
due to Boudhitt\'{e} and Todinca (BT) \cite{bouchitte2001treewidth}.
After a first leap in that direction \cite{tamaki2019positive},
some improvements have been reported
\cite{tamaki2019computing,althaus2021ontamaki}, 
but those improvements are incremental.
 
In this paper, we pursue a completely different approach.
We develop a heuristic algorithm for the upper bound as well
as one for the lower bound. These algorithms
iteratively improve the bounds in hope that they
converge to the exact treewidth from both sides.

Our upper bound algorithm is based on the following idea.
For $\BB \subseteq 2^{V(G}$, 
we say that $\BB$ admits a tree-decomposition of $G$ if
every bag of this tree-decomposition belongs to $\BB$.
The treewidth of $G$ with respect to $\BB$, denoted by $\tw_{\BB}(G)$,
is the smallest $k$ such that $\BB$ admits a tree-decomposition of
$G$ of width $k$. If $\BB$ admits no tree-decomposition of $G$,
then $\tw_{\BB}(G)$ is undefined. A vertex set $X
\subseteq V(G)$ is a \emph{potential maximal clique} of $G$ if it 
is a maximal clique of some minimal triangulation of $G$.
We denote by $\Pi(G)$ the set of all potential maximal cliques
of $G$.  Bouchitt\'{e} and Todinca \cite{bouchitte2001treewidth}
observe that $\Pi(G)$ admits a tree-decomposition of $G$
of width $\tw(G)$ and present a dynamic programming algorithm
(BT dynamic programming) to compute $\tw(G)$ based on this fact.
Indeed, BT dynamic programming can be applied to an arbitrary set
$\Pi$ of potential maximal cliques to compute $\tw_{\Pi}(G)$.
This allows us to work in a solution space where each solution
is a set of potential maximal cliques rather than an individual
tree-decomposition. A solution $\Pi$ encodes a potentially
exponential number of tree-decompositions it admits and 
offers rich opportunities of improvements in terms
of $\tw_{\Pi}(G)$.
This approach has been proposed by Tamaki in his unpublished work 
\cite{tamaki2019heuristic}, 
where he presents several {\it ad hoc} operations to enrich $\Pi$
in hope of reducing $\tw_{\Pi}(G)$. 
We extend and simplify this approach by replacing those operations
with a single merging operation: given two sets $\Pi_1$ and $\Pi_2$
of potential maximal cliques, we construct a new set $\Pi$
that includes $\Pi_1 \cup \Pi_2$ together with some additional
potential maximal cliques potentially useful for making 
$\tw_\Pi(G)$ smaller than both $\tw_{\Pi_1}(G)$ and 
$\tw_{\Pi_2}(G)$. See Section~\ref{sec:upper} for more details. 

The lower bound algorithm is based on the well-known fact
that, for every minor $H$ of $G$, we have $\tw(H) \leq \tw(G)$.
Starting from a greedily computed minor $H_0$ of $G$, the algorithm
tries to construct a sequence of minors $H_0$, $H_1$, \ldots $H_k$
with $\tw(H_i) < \tw(H_{i + 1})$ for $0 \leq i < k$ and
hopefully $\tw(H_k) = \tw(G)$. Although minors have been 
used to compute lower bounds on the treewidth
\cite{bodlaender2011treewidth}, the goal has been to quickly obtain a
lower bound of reasonable quality to be used, say, in branch-and-bound procedures.
There seems to be no attempt in the literature to develop an
algorithm which, given a minor $H$ of $G$ with $\tw(H) < \tw(G)$,
construct a minor $H'$ of $G$ with 
an improved lower bound $\tw(H') > \tw(H)$.
In view of this lack of attempts, our finding that this task
can be performed with reasonable efficiency in practice 
might be somewhat surprising. See Section~\ref{sec:lower} 
for details. 

We have implemented a treewidth solver based on this approach 
and evaluated it on the bonus instance set from the 
PACE 2017 algorithm
implementation challenge for treewidth. 
This set is designed to remain challenging for
solvers to be developed after the challenge. It consists
of 100 instances and, 
according to the summary provided with the set, the time
spent to compute the exact treewidth by the winning solvers
of PACE 2017 is longer than an hour for 58 instances
and longer than 12 hours for 23 instances, 
including 9 instances which fail to be solved at all.
The results of applying our solver on the
91 solved instances are summarized as follows.
With a timeout of 30 minutes using two threads (one for the upper bound and the other
for the lower bound), 62 instances are exactly solved;
for 20 of the other instances, 
the upper bound equals the exact treewidth
and the lower bound is off by one. Moreover, 
with a timeout of 6 hours, our solver exactly solves 2 of 
the 9 unsolved instances. 
These results suggest that our approach is extremely effective
in coping with instances that are hard for conventional solvers.
See Section~\ref{sec:experiments} for details.

The source code of the solver used in our experiments is available at
\cite{twalgor}.

\section{Preliminaries}
\label{sec:prelim}

\paragraph*{Graph notation}
In this paper, all graphs are simple, that is, without self loops or
parallel edges. Let $G$ be a graph.
We denote by $V(G)$ the vertex set
of $G$ and by $E(G)$ the edge set of $G$.
As $G$ is simple, each edge of $G$ is  
a subset of $V(G)$ with exactly two members that
are adjacent to each other in $G$.
The \emph{complete graph} on $V$, denoted
by $K(V)$, is a graph with vertex set $V$ 
in which every vertex is adjacent to all other vertices.
The subgraph of $G$ induced by $U \subseteq V(G)$ is denoted by 
$G[U]$.  We sometimes use an abbreviation $G \setminus U$ to
stand for $G[V(G) \setminus U]$. 
A vertex set $C \subseteq V(G)$ is a \emph{clique} of $G$ if
$G[C]$ is a complete graph.
For each $v \in V(G)$, $N_G(v)$ denotes the set of neighbors of $v$ in $G$:
$N_G(v) = \{u \in V(G) \mid \{u, v\} \in E(G)\}$.
For $U \subseteq V(G)$, the {\em open neighborhood of $U$ in $G$}, denoted
by $N_G(U)$,  is the set of vertices adjacent to some vertex in $U$ but not
belonging to $U$ itself: $N_G(U) = (\bigcup_{v \in U} N_G(v)) \setminus U$.
The {\em closed neighborhood of $U$ in $G$}, denoted by $N_G[U]$, is defined
by $N_G[U] = U \cup N_G(U)$.  

We say that vertex set $C \subseteq V(G)$ is {\em connected in}
$G$ if, for every $u, v \in C$, there is a path in $G[C]$ between $u$
and $v$. It is a {\em connected component} or simply a {\em component} 
of $G$ if it is connected and is inclusion-wise maximal subject to this condition.
A vertex set $S \subseteq V(G)$ is a {\em separator} of $G$ if 
$G \setminus S$ has more than one component.
A graph is a \emph{cycle} if it is connected and every vertex
is adjacent to exactly two vertices. A graph is a \emph{forest}
if it does not have a cycle as a subgraph. A forest is
a \emph{tree} if it is connected.

\paragraph*{Tree-decompositions}
A {\em tree-decomposition} of $G$ is a pair $(T, \XX)$ where $T$ is a tree
and $\XX$ is a family $\{X_i\}_{i \in V(T)}$ of vertex sets of $G$, indexed
by the nodes of $T$, such that the following 
three conditions are satisfied. We call 
each $X_i$ the {\em bag} at node $i$.   
\begin{enumerate}
  \item $\bigcup_{i \in V(T)} X_i = V(G)$.
  \item For each edge $\{u, v\} \in E(G)$, there is some $i \in V(T)$
  such that $u, v \in X_i$.
  \item For each $v \in V(G)$, the set of nodes $I_v = \{i \in V(T) \mid v \in
  X_i\} \subseteq V(T)$ is connected in $T$.
\end{enumerate}
The {\em width} of this tree-decomposition is $\max_{i \in V(T)} |X_i| - 1$.
The {\em treewidth} of $G$, denoted by $\tw(G)$ is the smallest $k$ such
that there is a tree-decomposition of $G$ of width $k$.

It is well-known that, for each pair $(i, j)$ of adjacent
nodes of a tree-decomposition $\TT = (T, \XX)$, the intersection
$X_i \cap X_j$ is a separator of $G$. We say that $\TT$ \emph{induces}
separator $S$ if there is an adjacent pair $(i, j)$ such that
$S = X_i \cap X_j$.

\paragraph*{Minimal separators and potential maximal cliques}
Let $G$ be a graph and $S$ a separator of $G$. 
For distinct vertices $a, b \in V(G)$, 
a separator $S$ is an {\em $a$-$b$
separator} if there is no path between $a$ and $b$ in $G \setminus S$; 
it is a 
{\em minimal $a$-$b$ separator} if it is an $a$-$b$ separator and
no proper subset of $S$ is an $a$-$b$ separator.
A separator is a {\em minimal separator} if it is a minimal $a$-$b$ separator
for some $a, b \in V(G)$.  

Graph $H$ is {\em chordal} if every induced cycle of $H$ has exactly three vertices. 
$H$ is a {\em triangulation of graph $G$} if it is chordal,
$V(G) = V(H)$, and
$E(G) \subseteq E(H)$. A triangulation $H$ of $G$ is {\em minimal}
if it there is no triangulation $H'$ of $G$ such that
$E(H')$ is a proper subset of $E(H)$.
A vertex set $X \subseteq V(G)$ is a {\em potential maximal
clique} of $G$, if $X$ is a maximal clique in 
some minimal triangulation of $G$.
We denote by $\Pi(G)$ the set of all potential maximal  
cliques of $G$ and by $\Pi_k(G)$ the set of all potential maximal cliques
of $G$ of cardinality at most $k$.

\paragraph*{Bouchitt\'{e}-Todinca dynamic programming}
The treewidth algorithm of 
Bouchitt\'{e} and Todinca \cite{bouchitte2001treewidth} 
is based on the fact that every graph $G$ has a minimal triangulation
$H$ such that $\tw(H) = \tw(G)$ (see \cite{heggernes2006minimal} for a clear
exposition). This fact straightforwardly implies that
$\Pi(G)$ admits an optimal tree-decomposition of $G$.
Their algorithm consists of an algorithm for constructing
$\Pi(G)$ and a dynamic programming algorithm (BT dynamic programming) to
compute $\tw_{\Pi(G)}(G)$. As noted in the introduction, BT dynamic
programming can be applied to compute $\tw_{\Pi}(G)$ for an arbitrary
$\Pi \subseteq \Pi(G)$.

The most time-consuming part of their treewidth algorithm 
is the construction of $\Pi(G)$. 
Empirically observing that $\Pi_{k + 1}(G)$
is substantially smaller than $\Pi(G)$ for $k \leq \tw(G)$, 
authors of modern implementations of 
the BT algorithm \cite{tamaki2019positive,tamaki2019computing,althaus2021ontamaki} 
use BT dynamic programming with $\Pi = \Pi_{k + 1}(G)$ 
to decide if $\tw(G) \leq k$. Moreover, they try to avoid the full generation of 
$\Pi_{k + 1}(G)$, by being lazy and generating a potential maximal clique only
when it becomes absolutely necessary in evaluating the recurrence.

Both our upper and lower bound algorithms use, as a subprocedure, 
such an implementation of the BT algorithm for treewidth, in particular an implementation
of the version proposed in \cite{tamaki2019computing}. 
In addition, our upper bound algorithm uses BT dynamic programming
in its fully general form, to evaluate each solution $\Pi$ in our
solution space as described in the introduction.
The efficiency of BT dynamic programming, which runs 
in time linear in $|\Pi|$ with a factor polynomial in $|V(G)|$,
is crucial in our upper bound algorithm

\paragraph*{Contractions and minors}
Let $F \subseteq E(G)$ be a forest on $V(G)$.
The \emph{contraction} of $G$ by $F$, denoted by $G / F$
is a graph whose vertices are the connected components of $F$ and
two components $C_1$ and $C_2$ are adjacent to each other
if and only if there is $v_1 \in C_1$ and $v_2 \in C_2$ such
that $v_1$ and $v_2$ are adjacent to each other in $G$.
A graph $H$ is a \emph{minor} of $G$ if it is a subgraph of
some contraction of $G$. It is well-known and is easy to verify
that $\tw(H) \leq \tw(G)$ if $H$ is a minor of $G$.  

\paragraph*{Minimal triangulation algorithms}
There are many algorithms for minimal triangulation of a graph
(see \cite{heggernes2006minimal} for a survey). 
For purposes in the current work, we are interested in algorithms that produce
a minimal triangulation of small treewidth.
Although a minimal triangulation $H$ of $G$ such that
$\tw(H) = \tw(G)$ can be computed by the BT algorithm for treewidth,
we need a faster heuristic algorithm. 
The MMD (Minimal Minimum Degree) 
algorithm \cite{berry2003minimum} is known to
perform well.
We use a variant MMAF (Minimal Minimum Average Fill)
\cite{tamaki2021heuristic} of MMD which 
performs slightly better than the original MMD 
on benchmark instances.

\paragraph*{Safe separators and almost-clique separators}
Bodlaender and Koster \cite{bodlaender2006safe} introduced
the notion of safe separators for treewidth.
Let $S$ be a separator of a graph $G$.
We say that $S$ is \emph{safe for width $k$}, 
if $S$ is induced by some tree-decomposition of $G$ of width $k$.
It is simply safe if it is safe for width $\tw(G)$.
The motivation of looking at safe separators is the fact
that there are easily verifiable sufficient conditions for 
$S$ being safe and a safe separator detected by
those sufficient conditions can be used to reduce the
problem of deciding if $\tw(G) \leq k$ to smaller subproblems.
A trivial sufficient condition is that
$S$ is a clique. Bodlaender and Koster observed that
this condition can be relaxed to $S$ being an \emph{almost-clique},
where $S$ is an almost-clique if $S \setminus \{v\}$ is
a clique for some $v \in S$. More precisely, a minimal separator
that is an almost-clique is safe. They showed that
this observation leads to a powerful preprocessing method of treewidth computation.
An \emph{almost-clique separator decomposition} of graph $G$ is a
tree-decomposition $\AA$ of $G$ such that 
every separator induced by $\AA$ is an almost-clique minimal separator.
For each bag $A_i$ of $\AA$, let $G_i$ be a graph
on $A_i$ obtained from $G[A_i]$ by adding edges of $K(N(C))$ 
for every component $C$ of $G \setminus A_i$. The following proposition
holds \cite{bodlaender2006safe}.

\begin{proposition}
\label{prop:acs-dec}
\begin{enumerate}
  \item $\tw(G)$ is the maximum of $\tw(G_i)$, where $i$ ranges
  over the nodes of the decomposition, and a tree-decomposition of
  $G$ of width $\tw(G)$ is obtained by combining tree-decompositions
  of $G_i$ as prescribed by $\AA$.
  \item $G_i$ is a minor of $G$ for each $i$.
\end{enumerate}
\end{proposition}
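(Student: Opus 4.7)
The plan is to tackle Part 2 first, since Part 1's upper bound direction $\tw(G) \geq \max_i \tw(G_i)$ then follows immediately from the minor monotonicity of treewidth noted at the end of the paragraph on contractions and minors. For the remaining direction of Part 1, I would construct a tree-decomposition of $G$ of width $\max_i \tw(G_i)$ by gluing together tree-decompositions of the individual $G_i$'s along the tree structure of $\AA$.

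For Part 2, fix $i$ and consider a component $C$ of $G \setminus A_i$. Using the tree-decomposition property of $\AA = (T, \{A_k\})$, I first observe that $C$ lies entirely within the set $V_j = (\bigcup_{k \in T_j^*} A_k) \setminus A_i$ for some neighbor $j$ of $i$ in $T$, where $T_j^*$ is the component of $T - i$ containing $j$, and that $N_G(C) \subseteq A_i \cap A_j$. Writing $S_{ij} = A_i \cap A_j$, the hypothesis tells us that $S_{ij}$ is an almost-clique, and hence so is its subset $N_G(C)$. Let $N_G(C) = \{v\} \cup W$ with $W$ a clique in $G$. I build a Steiner-tree-like connected subgraph of $G[C]$ that meets a neighbor of each $w \in W$; this is possible because $G[C]$ is connected and every $w$ has a neighbor in $C$. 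Adjoining $v$ (which, if it lies in $N_G(C)$, has a neighbor in $C$) produces a connected branch set whose contraction to a single vertex identified with $v$ makes $v$ adjacent to every $w \in W$ in the resulting minor, upgrading $N_G(C)$ to a clique. Performed independently for every component $C$ (which are pairwise disjoint), and followed by deletion of any leftover vertices in each $C$, this exhibits $G_i$ as a minor of $G$.

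For the remaining direction of Part 1, let $\TT_i$ be a tree-decomposition of $G_i$ of width $\tw(G_i)$. The key observation is that for each edge $\{i, j\}$ of $T$, the separator $S_{ij}$ is a clique in $G_i$: since $S_{ij}$ is a minimal separator of $G$, it has a full component $C$ in $V_j$ with $N_G(C) = S_{ij}$, so $K(S_{ij})$ is among the edges added in constructing $G_i$. Hence some bag of $\TT_i$ contains $S_{ij}$, and symmetrically some bag of $\TT_j$ contains $S_{ij}$. I glue the $\TT_i$'s together along $T$ by adding, for each edge $\{i, j\}$ of $T$, a new tree edge joining two such bags. The three tree-decomposition axioms are then verified routinely: vertex and edge coverage follow because every vertex and every edge of $G$ lies in some $A_i$ and hence in some bag of $\TT_i$; the subtree condition for each $v \in V(G)$ follows from its subtree condition in $\AA$ together with the fact that whenever $v$ belongs to two bags $A_i, A_j$ with $\{i,j\} \in E(T)$, it lies in $S_{ij}$ and so is present in the gluing bags on both sides.

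The main obstacle I anticipate is verifying that for every edge $\{i, j\}$ of $T$ the separator $S_{ij}$ has a full component in $V_j$, and symmetrically in $V_i$, which is what guarantees $S_{ij}$ is a clique in both $G_i$ and $G_j$. In a potentially pathological case where both full components of $S_{ij}$ lie on a single side, the gluing must be refined; in that situation one likely invokes the full strength of the Bodlaender--Koster safeness result, which asserts that adding the missing edges to turn an almost-clique minimal separator into a clique preserves $\tw(G)$, so that we may replace $G$ by a graph in which every $S_{ij}$ is a genuine clique before applying the argument above.
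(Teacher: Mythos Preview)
The paper does not prove Proposition~\ref{prop:acs-dec}: it simply attributes the result to Bodlaender and Koster \cite{bodlaender2006safe} and moves on. So there is no in-paper argument to compare against; what you have written is an independent proof sketch of a cited result.

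Your sketch is largely sound, and in fact you put your finger on the genuine difficulty. For Part~2 the contraction argument works, though the phrase ``performed independently for every component $C$'' hides one wrinkle: several components $C$ on possibly different sides may share the same special vertex $v$, and then their branch sets all contain $v$ and must be merged into a single branch set through $v$ rather than treated separately. This is an easy fix. For Part~1, the obstacle you flag is real, not hypothetical: one can build an almost-clique separator decomposition in which $S_{ij}$ has no full component on the $j$-side, so that $S_{ij}$ fails to be a clique in $G_i$ and an optimal tree-decomposition of $G_i$ need not contain $S_{ij}$ in any bag (a $4$-cycle with a pendant gives such an example). Your proposed remedy---replace $G$ by the graph in which every $S_{ij}$ has been filled to a clique, invoking the Bodlaender--Koster safeness lemma to see that $\tw$ is unchanged---is exactly how the original reference handles it, so your plan is correct but not self-contained: it ultimately leans on the key lemma from the very paper being cited.
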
 

Unpublished work of Tamaki \cite{tamaki2021heuristic} 
shows that this preprocessing approach is effective for
instances that are much larger than those tested in \cite{bodlaender2006safe},
using a heuristic method for constructing almost-clique separator
decompositions. We use his implementation in the current work.

We also make an unconventional use of safe separators in our
lower bound algorithm. When we have a lower bound of $k$ on
$\tw(G)$, we wish to evaluate a minor $H$ of $G$ for the possibility
of leading to a stronger lower bound. We use the set of
all minimal separators of $G$ that are safe for width $k$
in this evaluation. Note that the computation of such a set
is possible because $H$ is small.  

\section{The upper bound algorithm}
\label{sec:upper}
Recall that $\Pi(G)$ denotes the set of 
all potential maximal cliques of
$G$. In our upper bound algorithm, a \emph{solution} for $G$ is
a subset $\Pi$ of $\Pi(G)$ that admits at least one tree-decomposition of
$G$ and the \emph{value} of solution $\Pi$ is $\tw_{\Pi}(G)$. 

Our algorithm starts from a greedy solution and iteratively improves
the solution.
To improve a solution $\Pi$, we \emph{merge} it with another solution
$\Omega$ into another solution $\Pi'$ in hope of having
$\tw_{\Pi}(G) < \min\{\tw_{\Pi}(G), \tw_{\Omega}(G)\}$. 
This merged solution
$\Pi'$ contains $\Pi \cup \Omega$ together with some 
other potential maximal cliques so that
$\Pi'$ would admit a tree-decomposition that contains
some bags in $\Pi$, some bags in $\Omega$, and some bags
belonging to this additional set of potential maximal cliques.
We describe below how these additional potential maximal cliques are
computed.

Let $X \in \Pi$ and $Y \in \Omega$ be distinct and not
crossing each other. Then, there is a unique component $C$ of $G \setminus X$
such that $Y \subseteq N[C]$ and a unique component $D$ of $G \setminus Y$ such
that $X \subseteq N[D]$. Let $U = N[C] \cap N[D]$ and let
$H$ be a graph on $U$ obtained from $G[U]$ by adding 
edges of $K(N(B))$ for each component $B$ of $G \setminus U$.
Let $\hat{H}$ be a minimal triangulation of $H$ with small treewidth:
if $|U|$ does not exceed a fixed threshold BASE\_SIZE (=60), then we
use the BT algorithm for treewidth to compute
$\hat{H}$; otherwise we use MMAF to compute $\hat{H}$. 
Here, BASE\_SIZE is a parameter of the algorithm
represented as a constant in the implementation. The parenthesized
number is the value of this parameter used in our experiment. In the
following, we use the same convention for citing algorithm parameters.
Note that each maximal clique of $\hat{H}$ 
is either a potential maximal clique of $G$ or a minimal separator of
$G$. If $\tw(\hat{H}) \leq \tw_{\Pi}(G)$, then we add all
potential maximal cliques of $G$ that are maximal cliques of
$\hat{H}$ to $\Pi'$. When this happens, then $\Pi'$ admits
a tree-decomposition of width at most $\max\{\tw_{\Pi}(G),
\tw_\Omega(G)\}$ consisting of some bags in $\Pi$, some bags in $\Omega$,
and some bags that are maximal cliques of $\hat{H}$. In this way,
$\Pi'$ would admit tree-decompositions that are not admitted by 
the simple union $\Pi \cup \Omega$ and, with some luck, some of the
newly admitted tree-decomposition may have width smaller than $\tw_{\Pi}(G)$.

The procedure \Merge($\Pi$, $\Omega$) merges $\Pi$ with $\Omega$, 
applying the above operation to some pairs
$X \in \Pi$ and $Y \in \Omega$ chosen as follows. 
We first pick $X \in \Pi$
uniformly at random. Then, let $C$ be the largest component of 
$G \setminus X$.
We choose $Y \in \Omega$ such that $Y \subseteq N[C]$ and 
$|Y| \leq \tw_{\Pi}(G)$.
The first condition ensures that the method in the previous
paragraph can be applied to this pair of $X$ and $Y$. 
The second condition is meant to increase the chance of
newly admitted tree-decompositions to have width smaller than $\tw_{\Pi}(G)$.
We sort the candidates of such $Y$ in the
increasing order of $|N[C] \cap N[D]|$, where $D$ is the component
of $Y$ such that $X \subseteq N[D]$, and use 
the first N\_TRY (=50) elements of this sorted list.
We prefer $Y$ such that $U = N[C] \cap N[D]$ is small, 
because that would increase the chance of the minimal triangulation
$\hat{H}$, described in the previous paragraph, to have small treewidth.
All the resulting potential maximal
cliques from these trials are added to $\Pi'$. 

In addition to procedure \Merge, our algorithm uses two 
more procedures \InitialSolution() and \Improve($\Pi$) described below.
The input graph $G$ is fixed in these procedures.

\begin{description}
\item[\InitialSolution()] generates an initial solution $\Pi$.
We use a randomize version of MMAF to generate 
N\_INITIAL\_GREEDY( =10) minimal triangulations of $G$ 
and take $H$ with the smallest treewidth.
The solution $\Pi$ returned by the call 
\InitialSolution() is the set of maximal cliques of $H$.
\item[\Improve($\Pi$)] returns a solution $\Pi'$ with $\Pi \subseteq \Pi'$,
where efforts are made to make $\tw_{\Pi'}(G)$ strictly smaller than
$\tw_{\Pi}(G)$.
We proceed in the following steps.
\begin{enumerate}
  \item Let $\Omega = $ \InitialSolution().
  \item While $\tw_{\Omega}(G)) > \tw_{\Pi}(G)$, replace $\Omega$ by
  \Improve($\Omega$).
  \item Return \Merge($\Pi$, $\Omega$). 
\end{enumerate}
Note the solution $\Omega$ to be merged with $\Pi$ is generated independently
of $\Pi$ and improved so that $\tw_{\Omega}(G) \leq \tw_{\Pi}(G)$ before being
merged with $\Pi$. 
\end{description} 

Given these procedures, the main iteration of our algorithm proceeds as follows.
It is supposed that the algorithm has an access to lower bounds
provided by the lower bound algorithm.
\begin{enumerate}
  \item Let $\Pi = $ \InitialSolution(). Report $\tw_{\Pi}(G)$ as
  the initial upper bound on $\tw(G)$, together with a
  tree-decomposition of $G$ of width $\tw_{\Pi}(G)$ admitted by $\Pi$. 
  \item While $\tw_{\Pi}(G)$ is greater than the current lower bound,
  replace $\Pi$ by \Improve($\Pi$). When this replacement reduces $\tw_{\Pi}(G)$,
  report this new upper bound on $\tw(G)$, together with a 
  tree-decomposition of $G$ of width $\tw_{\Pi}(G)$ admitted by $\Pi$.
  We also shrink $\Pi$, removing all members of cardinality greater than $k + 2$,
  whenever $\tw_{\Pi}(G)$ is improved to $k$.
\end{enumerate}

\section{The lower bound algorithm}
\label{sec:lower}
In our lower bound algorithm, we use a procedure we call 
\Lift, which, given a graph $G$ 
and a forest $F$ on $V(G)$, 
finds another forest $F'$ such that $\tw(G / F') > \tw(G / F)$;
it inevitably fails if $\tw(G / F) = \tw(G)$. 
Given this procedure, the overall lower bound algorithm proceeds in the
following steps.
Let $G$ be given. It is supposed that the algorithm has an access to
the upper bound being computed by the upper bound algorithm.

\begin{enumerate}
  \item Construct a contraction $G / F$ of $G$, using a greedy heuristic
  for contraction-based lower bounds on treewidth.
  \item While $\tw(G / F)$ is smaller than the current upper bound on $\tw(G)$,
  replace $F$ by \Lift($G$, $F$) unless this call fails.  
\end{enumerate}

When the current upper bound is larger than $\tw(G)$, it is possible
that the call \Lift($G$, $F$) is made for $F$ such that $\tw(G / F) = \tw(G)$.
In such an event, the call would eventually fail but the time it takes would be
at least as the time taken by conventional solvers to compute $\tw(G)$.
Our solver implements a mechanism to
let such a call terminate as soon as the upper bound is updated to be equal
to the current lower bound $\tw(G / F)$.

The design of procedure {\Lift} is described in the following subsections.

\subsection{Contraction lattice}
First consider looking for the result of \Lift($G$, $F$) among the
subsets of $F$. Assuming that $\tw(G / F) < \tw(G)$, a subset $F'$ of
$F$ such that $\tw(G / F') > \tw(G / F)$ certainly exists.

For each $A \in 2^F$, let $H_A$ denote the contraction 
$G / (F \setminus A)$.
Then, $\Lambda(G, F) = \{H_A \mid A \in 2^F\}$ is a lattice
isomorphic to the power set lattice $2^F$, with top $G$ and bottom $G / F$.
Brute force searches for $H$ with $\tw(H) > \tw(G / F)$ in this lattice
are hopeless as $|F|$ can be
large: we typically have $|F| > 100$ for graph instances we target.

We need to understand the terrain of this lattice to design an
effective search method. In the remainder of this subsection
and subsequent subsections, let $k$ denote $\tw(G / F)$. 
Call $H \in \Lambda(G, F)$ \emph{lifted} if $\tw(H) > k$; otherwise
call it \emph{unlifted}. Let $\ML(G, F)$ denote the set of 
minimal lifted elements of $\Lambda(G, F)$. 
Call an unlifted element $H_A$ \emph{covered} if there
is some $H_B \in \ML(G, F)$ with $A \subset B$.
Let $\COV(G, F)$ denote the set of all covered unlifted elements of
$\Lambda(G, F)$. Ideally,
we wish to confine our search in $\COV(G, F) \cup \ML(G, F)$.
This would be possible if there is a way of knowing, for
each covered element $H_A$ and $e \in F \setminus A$, if
$H_{A \cup \{e\}}$ is still covered. Then, we would start
with the clearly covered element $H_{\emptyset}$ and 
greedily ascend the lattice staying in $\COV(G, F)$ until
we hit an element in $\ML(G, F)$. Although such an ideal
scenario is unlikely to be possible, we still aim at something close to it
in the following sense. For each $A \in 2^F$ such that
$H_A$ is unlifted, call $S \subseteq A$ an \emph{excess} in
$A$ if $H_{A \setminus S}$ is covered. We wish to confine our search
among elements with a small excess. We employ a strategy that
works only for pairs $(G, F)$ with a special property, which is described
in the following subsections.

\subsection{Critical fills}
Call a pair $\{u, v\}$ of distinct vertices of $G$ a \emph{fill} of $G$ if 
$u$ and $v$ are not adjacent to each other in $G$. For a fill $e$ of $G$,
let $G + e$ denote the graph on $V(G)$ with edge set $E(G) \cup \{e\}$.
We say that a fill $e$ of $G$ is \emph{critical for $F$} if
$\tw((G + e) / F) > \tw(G / F)$.

Suppose $G$ has a critical fill for $F$. Observe the following.

\begin{enumerate}
  \item Since adding a single edge to $G / F$ increases its treewidth, a small
  number of uncontractions applied to $G / F$ may suffice to increase its
  treewidth. Thus, we may expect that there is a lifted
  element $H_A$ in the lattice $\Lambda(G, F)$ such that 
  $|A|$ is smaller than the value that is expected in a general case (without
  a critical fill). This would make our search for a lifted element easier. 
  \item The fact that $e$ is a critical fill could be used to guide our
  search for a lifted element.
\end{enumerate}

The next section describes how we exploit the existence of a critical fill
to guide our search.

\subsection{Breaking a critical fill} 
Assuming that $G$ has a fill $e = \{u, v\}$ critical for $F$, 
we look for a lifted element $H_A$ in the lattice $\Lambda(G, F)$. 
We call the procedure for this operation \BreakFill($F$, $e$).
The reason of this naming is that, informally speaking, we 
remove the fill $e$ from $G + e$ by uncontracting some edges in $F$
maintaining the treewidth. We need some preparations before 
we describe this procedure.

\begin{proposition}
\label{prop:critical}
If $H_A$ is unlifted then $e$ is critical for $F \setminus A$.
\end{proposition}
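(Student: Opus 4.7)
The plan is to show this by a short chain of inequalities derived directly from the definitions, using only the monotonicity of treewidth under contraction (recorded in the Preliminaries under ``Contractions and minors'').

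Write $k = \tw(G/F)$ as in the excerpt. First I would reformulate what needs to be proved: by definition, $e$ is critical for $F \setminus A$ means $\tw((G+e)/(F \setminus A)) > \tw(G/(F \setminus A))$. Since $H_A = G/(F \setminus A)$ is unlifted, we already have $\tw(G/(F \setminus A)) \leq k$. So it suffices to prove $\tw((G+e)/(F \setminus A)) > k$.

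For this, the key observation is that $(G+e)/F$ can be obtained from $(G+e)/(F \setminus A)$ by further contracting the edges of $A$ (more precisely, their images in the quotient); this uses only that $F \setminus A$ and $A$ are disjoint forests whose union is $F$, and that contraction is associative in the obvious sense. Consequently $(G+e)/F$ is a contraction of $(G+e)/(F \setminus A)$, and therefore a minor of it, so the monotonicity of treewidth under minors gives
\[
\tw((G+e)/(F \setminus A)) \;\geq\; \tw((G+e)/F).
\]
By the assumption that $e$ is critical for $F$, the right-hand side strictly exceeds $k$. Chaining this with the unlifted bound on $\tw(G/(F \setminus A))$ yields the desired strict inequality, and hence criticality of $e$ for $F \setminus A$.

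There is no real obstacle here: the whole content is a two-line deduction from the definitions plus monotonicity of treewidth under contraction. The only point worth stating explicitly is the associativity-of-contraction remark, so that ``contract $F \setminus A$, then contract the remaining $A$-edges'' is indisputably the same as ``contract $F$''; once that is noted the rest is automatic.
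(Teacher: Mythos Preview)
Your proof is correct and follows essentially the same route as the paper's: reduce to showing $\tw((G+e)/(F\setminus A)) > k$ using that $H_A$ is unlifted, then obtain this from $\tw((G+e)/F) > k$ via the fact that $(G+e)/F$ is a contraction of $(G+e)/(F\setminus A)$. The only cosmetic difference is that the paper records $\tw(G/(F\setminus A)) = k$ (which follows since $G/F$ is a minor of $H_A$), whereas you use the weaker $\leq k$; either suffices.
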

\begin {proof}
Since $H_A$ is unlifted, we have 
$\tw(G / (F \setminus A)) = k$.
So, it suffices to show that $\tw((G + e) / (F \setminus A)) > k$.
We have $\tw((G + e) / F) > k$, 
since $e$ is critical for $F$.
We also have $\tw((G + e) / (F \setminus A)) \geq 
\tw((G + e) / F)$, since $(G + e) / F$ is a contraction
of $(G + e) / (F \setminus A)$. Therefore we have
$\tw((G + e) / (F \setminus A)) > k$.
\end{proof}

Let $u_A$ ($v_A$) denote
the vertex of $H_A$ into which $u$ ($v$, respectively) is contracted. 
We say that a separator $S$ of $H_A$ \emph{crosses} $e$ if $u_A$ and $v_A$ belong
to two distinct components of $H_A \setminus S$.
Define $\ncs_k(A, e)$ to be the number of
minimal separators of $H_A$ that are safe for width $k$
and moreover cross $e$. Observe the following.
\begin{enumerate}
  \item We have $\ncs_k(A, e) > 0$ if $H_A$ is unlifted. 
  To see this, suppose otherwise that
  $H_A$ is unlifted but $\ncs_k(A) = 0$.
  Then, $H_A$ has a tree-decomposition $\TT$ of width $k$ such that
  none of the minimal separators induced by $\TT$ crosses $e$. Then, 
  $\TT$ is a tree-decomposition of $(G + e) / (F \setminus A)$ as well, 
  contradicting the assumption that $e$ is critical for $F$ and
  hence for $F \setminus A$ by Proposition~\ref{prop:critical}.
  \item We have $\ncs_k(A, e) = 0$ if $H_A$ is lifted. This is simply
  because $H_A$ does not have any minimal separator that is safe
  for treewidth $k$ if $\tw(H_A) > k$. 
\end{enumerate} 
We empirically observe a tendency that 
$\ncs_k(A, e)$ decreases as $H_A$ approaches
a lifted element from below. Based on this observation, 
we use this function $\ncs_k$ to guide our search for lifted elements.
We are ready to describe our procedure \BreakFill($F$, $e$).
It involves two parameters UNC\_CHUNK ( =5) and 
N\_TRY ( =100) and proceeds as follows. 

\begin{enumerate}
  \item Let $A = \emptyset$.
  \item While $H_A$ is unlifted, repeat the following:
  \begin{enumerate}
    \item Pick N\_TRY random supersets $A'$ of $A$ with cardinality 
    $|A|$ + UNC\_CHUNK (or $|F|$ if this exceeds $|F|$)
    and let $A_{\rm best}$ be $A'$ such that $\ncs_k(A', e)$ is the smallest.
    \item Replace $A$ by $A_{\rm best}$.
  \end{enumerate}
  \item Return $A$.
\end{enumerate}
This procedure is correct in a purely theoretical sense: since $H_F = G$ is
always lifted, provided $\tw(G / F) < \tw(G)$, it eventually returns some $A$ such
that $H_A$ is lifted. The time required for this to happen, however, can be 
prohibitively long since we are supposing that $|F|$ is fairly large.
The success of this procedure hinges on the effectiveness 
of our heuristic relying on the critical fill. 

\subsection{Procedure \Lift}
Our procedure \Lift($G$, $F$) is recursive and works in the following steps.
\begin{enumerate}
  \item Choose a fill $e$ of $G$ with the following heuristic criterion.
  For each $v \in V(G)$, let $d_F(v)$ denote the degree of $v'$ in $G / F$,
  where $v'$ is the vertex of $G / F$ into which $v$ contracts.
  Then we choose $e = \{u, v\}$ so as to maximize the pair $(d_F(u), d_F(v))$
  in the lexicographic ordering, where the order of $u$ and $v$ is chosen
  so that $d_F(u) \leq d_F(v)$.
  \item Let $F_1 = $\Lift($(G + e)/ F$). If $\tw(G / F_1) > \tw(G / F)$ then
  return $F_1$; otherwise, observing that $e$ is critical for $F_1$, 
  call \BreakFill($e$, $F_1$) to find 
  $F_2 \subseteq F_1$ such that $\tw(G / F_2) > \tw(G / F_1)$.
  \item  Greedily compute a maximal forest $F_3$ on $V(G)$ such that $F_2
  \subseteq F_3$ and $\tw(G / F_3) = \tw(G / F_2)$. Return $F_3$.
\end{enumerate}
The criterion for choosing $e$ in the first step is based on 
the following heuristic reasoning. Let $u''$ ($v''$) be the vertex
of $G / F_1$ into which $u$ ($v$, respectively) contracts. We expect that
if the size of the minimum vertex cut 
between $u''$ and $v''$ in $G / F_1$ is large, 
then the minimum cardinality of $A$ such that $H_A$ is lifted
in $\Lambda(G, F_1)$
would have a tendency to be small. 
Indeed, if the size of this cut is as large as $k$, 
then no separator of cardinality at most $k$ crosses $e$ and therefore
we have $\tw(H_\emptyset) > 0$. Although we cannot predict the size
of the minimum $u''$-$v''$ cut in $G / F_1$ when we are choosing $e$, 
having larger degrees of $u'$ and $v'$ in $G / F$ could have some
positive influence toward this goal. 
This criterion, however, has not been evaluated with respect to 
this goal: further experimental studies are needed here.  

We emphasize that Step 3 above is crucial in allowing us to work on
relatively small contractions throughout the entire computation.
Note also that, due to this step, the result of \Lift($G$, $F$) is not
a subset of $F$ in general.

\section{The overall algorithm}
\label{sec:overall}
The upper bound algorithm and the lower bound algorithm, together
with the preprocessing algotirhm, are combined in the following manner.
Fix the graph $G$ given.

\begin{enumerate}
  \item We compute an almost-clique separator decomposition $\AA$
  of $G$ using the method in \cite{tamaki2021heuristic}. 
  \item For each bag $A_i$ of $\AA$, let $G_i$ denote the
  graph on $A_i$ obtained from $G[A_i]$ by adding edges of $K(N(C))$ 
  for each component $C$ of $G \setminus A_i$.
  By Proposition~\ref{prop:acs-dec}, 
  the task of computing $\tw(G)$ reduces to the tasks of
  computing $\tw(G_i)$ for $i$, for all nodes $i$ of $\AA$.
  Moreover, $G_i$ is a minor of $G$.
  \item Let $i^*$ be such that $|A_{i^*}| \geq |A_i|$ for every node 
  $i$ of $\AA$. The lower bound algorithm works on $G_{i^*}$.
  When it finds a new lower bound $\tw(G_{i^*}/F)$ on $\tw(G_{i^*})$,
  this is also a lower bound on $\tw(G)$ since $G_{i^*}$ is a minor of $G$;
  we record this new lower bound $\tw(G_{i^*} /F)$ together with
  the minor $\tw(G_{i^*} /F)$ of $G$ certifying it. 
  \item The upper bound algorithm works on $G_i$ for each $i$, 
  keeping the current solution $\Pi_i$ of $G_i$ for each $i$.
  After initializing $\Pi_i$ for each $i$, we start iterations.
  In each iteration, we chooses $i_0$ to be $i$ such that
  the current upper bound $\tw_{\Pi_i}(G_i)$ on $\tw(G_{i})$ is the largest
  and replace $\Pi_{i_0}$ by \Improve($\Pi_{i_0}$), where
  the implicit graph it works on is set to $G_i$. 
  When the maximum of the upper bounds $\tw_{\Pi_i}(G_i)$ decreases, 
  we record the new upper bound on $G$ together with the
  tree-decomposition of $G$ combining $\AA$ with the 
  currently best tree-decomposition
  of $G_i$ for all nodes $i$ of $\AA$.
  \item As described in the previous sections, the upper bound algorithm
  and the lower bound algorithm have access to the current bound computed
  by each other and terminate when they match. 
\end{enumerate}

\section{Experiments}
\label{sec:experiments}
We have evaluated our solver by experiments.
The computing environment for our experiments is as follows.
CPU: Intel Core i7-8700K, 3.70GHz; RAM: 64GB; 
Operating system: Windows 10Pro, 64bit; 
Programming language: Java 1.8; JVM: jre1.8.0\_271.
The maximum heap size is set to 60GB. The solver uses
two threads, one for the upper bound and the other for the lower bound,
although more threads may be invoked for garbage collection by
JVM. 

As described in the previous sections, both of the upper and lower bound
algorithms use BT dynamic programming for deciding the treewidth, and enumerating the 
safe separators, of small graphs.
Our solver uses an implementation of the semi-PID version of this algorithm
\cite{tamaki2019computing}, which is available at the same github repository
\cite{twalgor} in which the entire source code of our solver is posted.

The upper bound computation uses a single sequence of pseudo-random numbers 
and the lower bound computation uses another independent single sequence.
The initial seed is set to a fixed value of 1 for both of these sequences,
for the sake of reproducibility. With this setting, our solver can be
considered deterministic.
    
We use the bonus instance set from the exact treewidth track of
PACE 2017 algorithm implementation challenge \cite{dell2018pace}. 
This set of instances are available at \cite{pace2017bonus}. We quote the
note by Holgar Dell, the PACE 2017 organizer, explaining his intention 
to compile these instances.  

\begin{quote}
The instance set used in the exact treewidth challenge of PACE 2017 
is now considered to be too easy. 
Therefore, this bonus instance set has been created to offer 
a fresh and difficult challenge. 
In particular, solving these instances in five minutes would require 
a 1000x speed improvement over the best exact treewidth solvers of 
PACE 2017. 
\end{quote}

The set consists of 100 instances and their
summary, available at \cite{pace2017bonus}
in csv format, lists each instance with the time spent for solving it and 
its exact treewidth if the computation is successful. 
According to this summary,  
the exact treewidth is known for 91 of those instances; the remaining
9 instances are unsolved.

Table~\ref{table:instances1} shows the list of those 91 solved instances,
We number them in the increasing order of the computation time provided
in the summary. In the ``name" column, a long instance name is 
shown by a prefix and a suffix making the instance
identifiable. Columns ``n'',
``m'', and ``tw'' give the number of vertices, the number of edges, 
and the treewidth, respectively, of each instance. 
The suffixes ``s'', ``m'', and ``h'' in the time column stand for
seconds, minutes, and hours respectively.

\begin{table}[hbtp]
\caption{bonus instances with known treewidth}
\label{table:instances1}
     \centering
  \begin{tabular}{|r|r|r|r|r|r||r|r|r|r|r|r|}
    \hline
    no. & name & n & m & tw & time & no. & name & n & m & tw & time \\
    \hline
1 & Sz512....man\_3 & 175 & 593 & 14 & 4.74s & 
2 & MD5-3....man\_4 & 225 & 705 & 12 & 7.45s\\
    \hline
3 & Promedas\_69\_9 & 133 & 251 & 9 & 11.5s & 
4 & GTFS\_....ro\_12 & 103 & 212 & 11 & 18.5s\\
    \hline
5 & Promedas\_56\_8 & 155 & 299 & 10 & 28.3s & 
6 & Promedas\_48\_5 & 134 & 278 & 11 & 46.1s\\
    \hline
7 & minxo....man\_2 & 231 & 606 & 4 & 1.24m & 
8 & Promedas\_49\_8 & 184 & 367 & 10 & 1.92m\\
    \hline
9 & FLA\_14 & 266 & 423 & 8 & 3.06m & 
10 & post-....an\_10 & 263 & 505 & 11 & 4.32m\\
    \hline
11 & Pedigree\_11\_7 & 202 & 501 & 14 & 4.64m & 
12 & count....an\_10 & 331 & 843 & 13 & 4.96m\\
    \hline
13 & mrpp\_....man\_3 & 106 & 589 & 24 & 5.00m & 
14 & GTFS\_....ram\_9 & 143 & 303 & 13 & 5.24m\\
    \hline
15 & Promedus\_38\_15 & 208 & 398 & 10 & 5.35m & 
16 & Promedas\_50\_7 & 175 & 362 & 12 & 7.40m\\
    \hline
17 & Promedus\_34\_11 & 157 & 289 & 11 & 7.61m & 
18 & GTFS\_....ro\_15 & 124 & 250 & 13 & 7.73m\\
    \hline
19 & GTFS\_....ro\_14 & 123 & 248 & 13 & 7.88m & 
20 & Promedas\_43\_13 & 197 & 354 & 10 & 8.09m\\
    \hline
21 & Promedas\_46\_8 & 175 & 318 & 11 & 8.36m & 
22 & Promedus\_14\_9 & 173 & 357 & 12 & 9.27m\\
    \hline
23 & jgira....man\_2 & 95 & 568 & 34 & 9.76m & 
24 & modge....man\_2 & 112 & 686 & 35 & 10.0m\\
    \hline
25 & Promedus\_38\_14 & 242 & 462 & 10 & 11.1m & 
26 & Pedigree\_11\_6 & 205 & 503 & 14 & 14.9m\\
    \hline
27 & Promedas\_27\_8 & 165 & 323 & 12 & 15.4m & 
28 & Promedas\_45\_7 & 159 & 313 & 12 & 17.4m\\
    \hline
29 & jgira....man\_2 & 105 & 658 & 33 & 20.4m & 
30 & jgira....man\_2 & 111 & 675 & 33 & 22.2m\\
    \hline
31 & Promedas\_25\_8 & 204 & 378 & 11 & 22.5m & 
32 & Pedigree\_12\_8 & 217 & 531 & 14 & 22.8m\\
    \hline
33 & Promedus\_34\_12 & 210 & 389 & 11 & 26.1m & 
34 & Promedas\_22\_6 & 200 & 415 & 12 & 28.0m\\
    \hline
35 & aes\_2....man\_3 & 104 & 380 & 23 & 31.5m & 
36 & Promedus\_18\_8 & 195 & 411 & 13 & 35.2m\\
    \hline
37 & 6s151.gaifman\_3 & 253 & 634 & 14 & 36.8m & 
38 & LKS\_15 & 220 & 385 & 10 & 39.4m\\
    \hline
39 & Promedas\_23\_6 & 253 & 500 & 12 & 45.3m & 
40 & Promedus\_28\_14 & 193 & 351 & 11 & 47.8m\\
    \hline
41 & Promedas\_21\_9 & 253 & 486 & 11 & 50.9m & 
42 & Promedas\_59\_10 & 209 & 396 & 11 & 56.3m\\
    \hline
43 & Promedas\_60\_11 & 216 & 387 & 11 & 58.2m & 
44 & Promedas\_69\_10 & 194 & 379 & 12 & 1.08h\\
    \hline
45 & newto....man\_2 & 119 & 459 & 19 & 1.18h & 
46 & jgira....man\_2 & 92 & 552 & 34 & 1.22h\\
    \hline
47 & Promedus\_34\_14 & 188 & 352 & 12 & 1.30h & 
48 & modge....man\_2 & 135 & 855 & 33 & 1.33h\\
    \hline
49 & jgira....man\_2 & 100 & 593 & 36 & 1.33h & 
50 & Promedas\_61\_8 & 156 & 305 & 13 & 1.48h\\
    \hline
51 & Promedas\_30\_7 & 164 & 320 & 13 & 1.62h & 
52 & FLA\_13 & 280 & 456 & 9 & 1.66h\\
    \hline
53 & am\_7\_....man\_6 & 189 & 424 & 14 & 1.72h & 
54 & LKS\_13 & 293 & 484 & 9 & 1.78h\\
    \hline
55 & SAT\_d....man\_3 & 130 & 698 & 22 & 1.83h & 
56 & Promedas\_28\_10 & 333 & 605 & 11 & 1.84h\\
    \hline
57 & smtli....man\_6 & 316 & 669 & 13 & 1.89h & 
58 & Promedas\_11\_7 & 191 & 385 & 13 & 2.64h\\
    \hline
59 & Promedus\_20\_13 & 193 & 353 & 12 & 3.39h & 
60 & Pedigree\_13\_12 & 264 & 646 & 15 & 3.50h\\
    \hline
61 & GTFS\_....am\_10 & 187 & 385 & 14 & 3.64h & 
62 & Promedas\_22\_8 & 224 & 441 & 13 & 3.77h\\
    \hline
63 & FLA\_15 & 325 & 522 & 9 & 4.68h & 
64 & GTFS\_....am\_15 & 198 & 406 & 14 & 4.99h\\
    \hline
65 & GTFS\_....am\_13 & 190 & 390 & 14 & 5.35h & 
66 & GTFS\_....am\_12 & 197 & 404 & 14 & 5.46h\\
    \hline
67 & Promedas\_63\_8 & 181 & 374 & 14 & 5.53h & 
68 & GTFS\_....am\_11 & 192 & 395 & 14 & 5.64h\\
    \hline
69 & Pedigree\_12\_14 & 284 & 703 & 15 & 6.18h & 
70 & Promedus\_12\_15 & 293 & 533 & 11 & 6.19h\\
    \hline
71 & Promedus\_12\_14 & 272 & 494 & 11 & 6.63h & 
72 & Promedas\_44\_9 & 276 & 534 & 12 & 6.65h\\
    \hline
73 & Promedas\_32\_8 & 238 & 487 & 13 & 7.05h & 
74 & NY\_13 & 283 & 448 & 9 & 7.16h\\
    \hline
75 & Promedus\_18\_10 & 187 & 397 & 14 & 8.98h & 
76 & Promedas\_34\_8 & 174 & 348 & 14 & 9.11h\\
    \hline
77 & Promedas\_62\_9 & 217 & 427 & 13 & 9.67h & 
78 & Promedus\_17\_13 & 180 & 349 & 13 & 11.1h\\
    \hline
79 & Promedus\_11\_15 & 247 & 497 & 13 & 12.5h & 
80 & Promedas\_24\_11 & 273 & 494 & 12 & 13.4h\\
    \hline
81 & Promedus\_14\_8 & 199 & 417 & 14 & 15.1h & 
82 & am\_9\_9.gaifman\_6 & 212 & 480 & 15 & 15.7h\\
    \hline
83 & NY\_11 & 226 & 369 & 10 & 17.9h & 
84 & mrpp\_....man\_3 & 140 & 856 & 28 & 18.5h\\
    \hline
85 & Pedigree\_12\_10 & 286 & 712 & 16 & 22.1h & 
86 & Promedas\_55\_9 & 221 & 425 & 13 & 27.0h\\
    \hline
87 & Pedigree\_12\_12 & 277 & 683 & 16 & 27.9h & 
88 & Promedas\_46\_15 & 227 & 416 & 13 & 28.4h\\
    \hline
89 & Pedigree\_13\_9 & 268 & 665 & 16 & 38.9h & 
90 & Promedas\_51\_12 & 230 & 431 & 13 & 40.2h\\
    \hline
91 & Promedus\_27\_15 & 189 & 353 & 13 & 41.6h 
\\
    \cline{1-6}
   \end{tabular}
\end{table}
We see that some of these instances are indeed challenging.
Even though they have been solved, they require more than a day
to solve. We also note that, to date, no new solvers
has been published that have overcome the challenge posed by this
instance set.

We have run our solver on these instances with the timeout of
30 minutes. 
Figure~\ref{fig:feasibles1} (instance No.1 -- No.45) and
Figure~\ref{fig:feasibles2} (instance No.46 -- No.91) show the results.
In each column representing an instance $G$, 
the box with a blue number plots the time for 
obtaining the best upper bound computed before timeout, 
where the non-negative number $d$ in the box indicates that 
this bound is $\tw(G) + d$. Similarly, 
the box with a red non-positive number $d$ plots the time for obtaining the best lower 
bound, which is $\tw(G) + d$.

We see from these figures that the bounds computed by our solver are
quite tight for most instances.
Let us say that the result for instance $G$ is of type $(d_1, d_2)$
if the lower bound (upper bound) obtained by our solver is $\tw(G) + d_1$
($\tw(G) + d_2$, respectively). Then, the results are of
type $(0, 0)$ for 62 instances, 
$(-1, 0)$ for 20 instances, $(-2, 0)$ for 3 instances, 
$(-3, 0)$ for one instance, $(0, 1)$ for 4 instances, 
and $(-2, 1)$ for one instance.

We also see from these figures that the performance of our 
solver on an instance is not so strongly correlated to 
the hardness of the instance as measured by the time taken by conventional
solvers. For example, of the last 6 instances,  each of which took more  
than a day to solve by the PACE 2017 solvers, 5 are exactly
solved by our solver and the remaining one has a result of type $(-1, 0)$.
Most of the instances with poorer results, with the
gap of 2 or 3 between the upper and lower bounds, occur
much earlier in the list.

\begin{figure}[htbp]
\begin{center}
\includegraphics[width=6in, bb = 0 0 850 426]{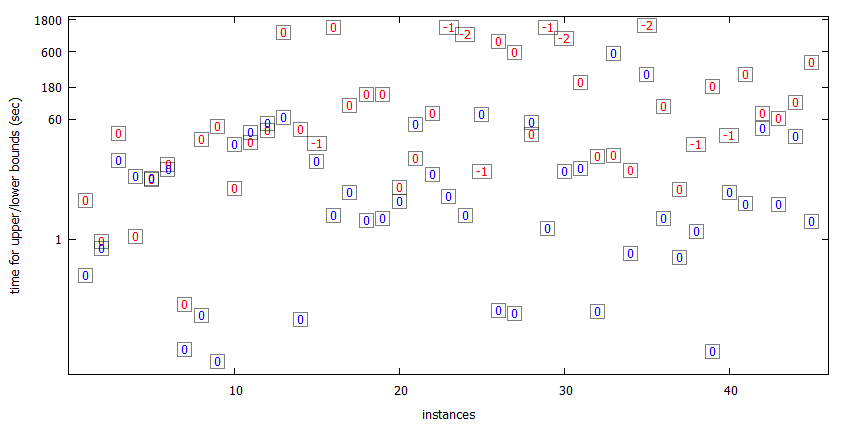}\\
\end{center}
\caption{Time for computing upper/lower bounds for instances 1--45}
\label{fig:feasibles1}
\end{figure} 

\begin{figure}[htbp]
\begin{center}
\includegraphics[width=6in, bb = 0 0 850 426]{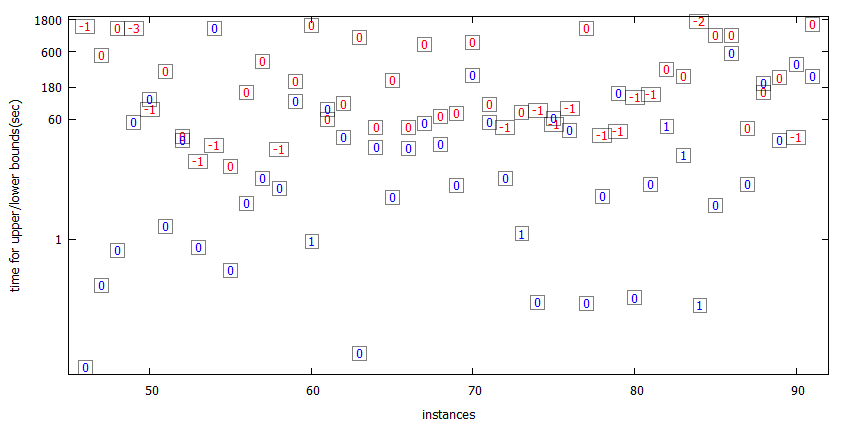}\\
\end{center}
\caption{Time for computing upper/lower bounds for instances 46--91}
\label{fig:feasibles2}
\end{figure} 

We have also run our solver on the 9 unsolved instances of the
bonus set, with 6 hour timeout. It solved 2 of them
and, for other 7 instances, the gap between the upper/lower bounds
is 2 for 2 instances, 3 for 4 instances, and 7 for one instance. 

The certificates of the lower bounds computed by our algorithm
are small and easily verified. For each $G$, let $\lb(G)$ denote the best
lower bound computed by our algorithm before the timeout of 30 minutes and 
let $\nc(G)$ denote the number of vertices of the 
certificate for this lower bound.
Then, the average, the minimum, and the maximum of the ratio $\nc(G)$ / $\lb(G)$
over the 91 solved instances are 2.32, 1.25, and 3.9 respectively. 
The maximum of $\nc(G)$ over these instances is 49 and
the time for verifying each certificate is at most 400 milliseconds.
Figures~\ref{fig:certificates1} and ~\ref{fig:certificates2} 
in the appendix 
show $\lb(G)$, $\nc(G)$, and the time to verify the certificate 
using our implementation of the BT algorithm, 
for each instance $G$.

\section{Conclusions and future work}
\label{sec:conclusions}
Our experiments using the bonus instance set from PACE 2017 algorithm
implementation challenge have revealed that 
our approach to treewidth computation
is extremely effective in tackling instances that are
hard for conventional treewidth solvers.
Even when it fails to give the exact treewidth, it produces
a lower bound very close to the upper bound. In many applications,
such a pair of tight bounds would be satisfactory, since it shows that
further search for a better tree-decomposition could only result in a
small improvement if at all. 

To examine the strength and the weakness
of our approach more closely, evaluation on more diverse sets of 
instances is necessary. For the upper bound part,
there are several implementations of heuristic 
algorithms publicly available, 
such as the submissions to the heuristic treewidth track of
PACE 2017 \cite{pace2017treewidth}.
Although they are primarily intended for large instances
for which exact treewidth appears practically impossible to compute,
some of them are nonetheless potential alternatives 
to our upper bound algorithm. Comparative studies 
would be needed to determine which algorithm is most suitable 
for our purposes. On the other hand, it would also be interesting to
evaluate our upper bound algorithm on large instances that are the 
principal targets of those algorithms. 

Since our lower approach is new, there are several potential improvements
that have not been tried out yet. More work could
result in better performances. 

We may also ask several theoretical questions regarding 
our lower bound approach.
For example, it would be interesting to ask if the lower bound algorithm
can be turned into a fixed parameter tractable algorithm for treewidth.
It would also be interesting and useful to identify parameters or structures 
of graph instances that make them difficult for our
lower bound algorithm.

\subsection*{Acknowledgment}
I thank Holger Dell for posing the
challenging bonus instances, which
have kept defying my ``great ideas'',
showing how they fail, and pointing to
yet greater ideas. 

\bibliography{main}

\newpage
\section*{Appendix}
\begin{figure}[htbp]
\begin{center}
\includegraphics[width=5.5in, bb = 0 0 850 426]{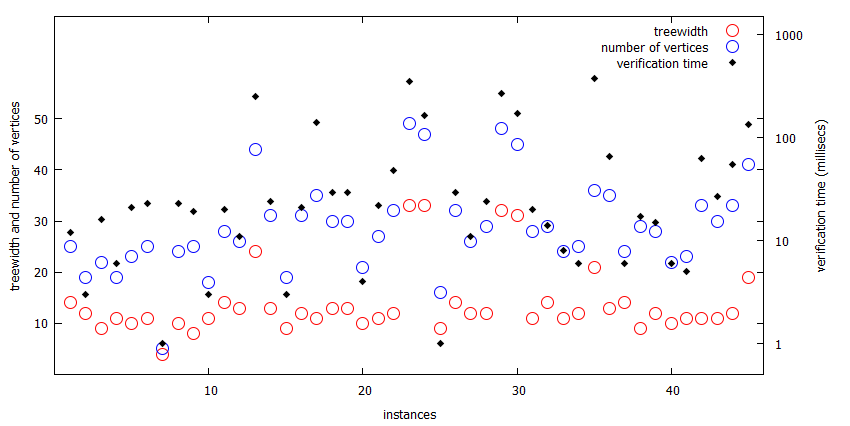}\\
\end{center}
\caption{Lower bound certificates for instances 1--45}
\label{fig:certificates1}
\end{figure} 

\begin{figure}[htbp]
\begin{center}
\includegraphics[width=5.5in, bb = 0 0 850 426]{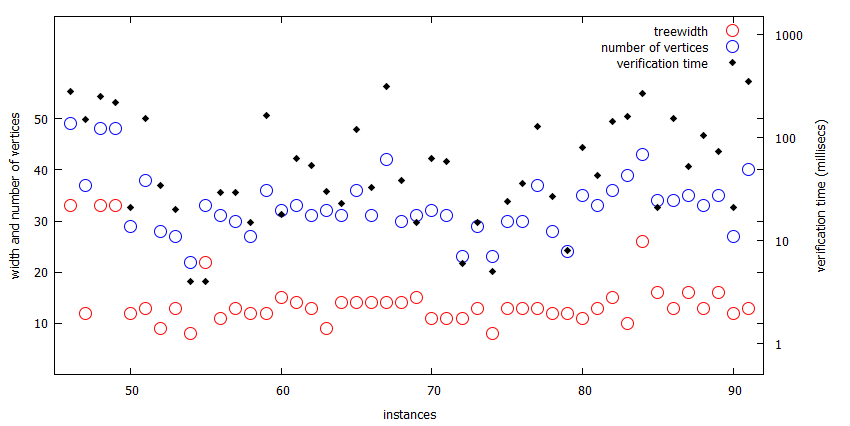}\\
\end{center}
\caption{Lower bound certificates for instances 46--91}
\label{fig:certificates2}
\end{figure}

\end{document}